\newtheorem{theorem}{Theorem}
\newtheorem{lemma}{Lemma}
\begin{document}

\title{Efficient measurement-device-independent detection of multipartite entanglement structure}

\author{Qi Zhao}
\author{Xiao Yuan}
\author{Xiongfeng Ma}
\affiliation{Center for Quantum Information, Institute for Interdisciplinary Information Sciences, Tsinghua University, Beijing, 100084 China}

\begin{abstract}
Witnessing entanglement is crucial in quantum information processing. With properly preparing ancillary states, it has been shown previously that genuine entanglement can be witnessed without trusting measurement devices. In this work, we generalize the scenario and show that generic multipartite entanglement structures, including entanglement of subsystems and entanglement depth, can be witnessed via measurement-device-independent means. As the original measurement-device-independent entanglement witness scheme exploits only one out of four Bell measurement outcomes for each party, a direct generalization to multipartite quantum states will inevitably cause inefficiency in entanglement detection after taking account of statistical fluctuations. To resolve this problem, we also present a way to utilize all the measurement outcomes. The scheme is efficient for multipartite entanglement detection and can be realized with state-of-the-art technologies.

\end{abstract}

\maketitle

\section{Introduction}

Manipulating quantum information provides remarkable advantages in many tasks, including quantum communication and computation \cite{nielsen2010quantum, ladd2010quantum}. It is widely believed that quantum entanglement \cite{Horodecki09} is an essential resource for many quantum information schemes, including Bell nonlocality test \cite{bell1964einstein}, quantum key distribution \cite{bb84, Ekert91}, and quantum computing \cite{nielsen2010quantum}. Hence, witnessing the existence of entanglement is a vital benchmark step for those schemes. A conventional way for witnessing entanglement is by measuring a Hermitian operator $W$ that satisfies $\mathrm{Tr}[\sigma W]\ge 0$ for all separable states $\sigma$ and $\mathrm{Tr}[\rho W]<0$ for a certain entangled state $\rho$. Such a method is generally called entanglement witness (EW) \cite{Terhal200161}. For a review of the subject, see refer to Ref.~\cite{guhne2009} and references therein.

The conclusion of conventional EW relies on faithful realization of measurements. Imperfect measurements can lead to inaccurate estimation of the expected value $\mathrm{Tr}[\rho W]$, which can cause false identification of entanglement even for separable states \cite{Yuan14}. One possible solution to such a problem is by running nonlocality tests \cite{hensen2015loophole, Shalm15, Giustina15}, such as Bell's inequality tests, which can witness entanglement without assuming the realization devices. While realizing a loophole-free Bell test for an arbitrary quantum state is still technically challenging, a compromised method, called measurement-device-independent entanglement witness (MDIEW) is shown to be able to detect arbitrary entangled state \cite{Branciard13} and be experimental friendly \cite{Yuan14,
nawareg2015experimental}. As shown in Fig.~\ref{Fig:MDIEW}, the MDIEW scheme shares a strong similarity to the MDI quantum key distribution protocol \cite{Lo12}, which can also be regarded as a modification of the Bell test \cite{Buscemi12}. In the bipartite scenario, Alice and Bob first prepare ancillary inputs $\tau_s$ and $\omega_t$ according to local random numbers $s$ and $t$, respectively. Then, Alice (resp.~Bob) performs a Bell state measurement (BSM) on the joint state of $\rho_A$ (resp.~$\rho_B$) and the ancillary input $\tau_s$ (resp.~$\omega_t$). Based on the probability distribution of inputs and outputs, it is shown that the witness of entanglement does not rely on the measurement devices.

\begin{figure}[bth]
\centering \resizebox{6cm}{!}{\includegraphics{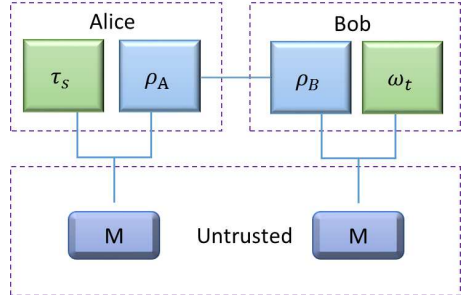}}
\caption{Measurement device independent entanglement witness. Two users, Alice and Bob, are asked to witness the entanglement of a bipartite state $\rho_{AB}$. In the MDIEW scheme, Alice and Bob randomly prepares ancillary states and perform a Bell state measurement jointly on the to-be-witnessed state and the ancillary state.}
\label{Fig:MDIEW}
\end{figure}

For multipartite systems, states can have rich entanglement structures. For instance, when dividing a state into subsystems, how the subsystem entangles with each other determines the \emph{entanglement structure} of the state. Additionally, entanglement structure also have some high-level properties, such as \emph{entanglement depth}, which is related to the concept of $k$-producible states \cite{guhne2005multipartite,seevinck2001sufficient}. A $k$-producible pure state $\ket{\phi}$ can be expressed as a tensor products of subsystems, $\ket{\phi}=\bigotimes_{i=1}^{m}\ket{\phi_i}$, where each subsystem $\ket{\phi_i}$ involves at most $k$ parties. A mixed state is $k$-producible if it can be expressed as a mixture of $k$-producible pure states. If an $N$-partite state is $k$-producible but not $(k-1)$-producible, then such a state has a depth of $k$. When an $N$-partite state has depth of $N$, we call it genuinely entangled. In the original MDIEW scheme \cite{Branciard13}, it is shown that genuine entanglement can be detected in an MDI manner, while, multipartite entanglement apart from genuine entanglement also has important applications in quantum information processing, e.g. high-precision metrology \cite{giovannetti2004quantum} and extreme spin squeezing \cite{sorensen2001entanglement}. Therefore, it is also important to detect general multipartite entanglement structures. Many works have provided ways to detect entanglement relationships between subsystems \cite{huber2013structure,shahandeh2014structural} and entanglement depth \cite{sorensen2001entanglement,liang2015family} with trusted measurement devices. However, it is left open whether one can detect general multipartite entanglement structures, including entanglement between subsystems and entanglement depth, via MDI means.

Also, it is worth mentioning that the original MDIEW protocol is inefficient for detecting multipartite entanglement, especially when the number of parties is large. In the bipartite qubit scenario, only one out of four BSM outcomes of each party is collected for the final estimation of entanglement. As there are four BSM outcomes for each party and in total 16 outcomes for both parties, only a small fraction of experiment data is exploited. When extending the scenario to $N$ parties, only a ratio of $4^{-N}$ outcomes is useful for witness.

In this work, we present an explicit MDI entanglement detection scheme for a multipartite entanglement structure. In Sec.~\ref{Sec:example}, we first review the original MDIEW scheme and point out its inefficiency. In Sec.~\ref{Sec:IMDIEW}, we propose a more efficient MDIEW method that exploits all BSM outcomes to faithfully detect entanglement. As an example, we show how to detect a general two-qubit Werner state. In Sec.~\ref{Sec:MMDIEW}, we show that the efficient MDIEW can be used for detecting multipartite entanglement structure. Finally in Sec.~\ref{Sec:Conclusion}, we discuss our result, its possible application in practice, and prospective works.

\section{MDIEW}\label{Sec:example}
Many efforts have been devoted to detect the existence of entanglement \cite{guhne2009, Dariusz14}. Recently, a Bell-like scenario with quantum inputs was proposed to witness entanglement without trusting the measurement devices, usually referred to as MDIEW \cite{Branciard13}. In the bipartite case, two users, Alice and Bob, share a bipartite state $\rho_{AB}$ defined in a Hilbert space $\mathcal{H_A}\otimes\mathcal{H_B}$ with dimensions $d_A$ and $d_B$. To witness the entanglement of $\rho_{AB}$, Alice and Bob randomly prepare quantum state $\tau_s$ and $\omega_t$, and then perform BSM on the to-be-witnessed state and the ancillary state jointly, respectively. In the original protocol, only one projection outcome is considered as a successful measure, denoted by 1, and other inconclusive outcomes including losses are regarded as a failure, denoted by 0. Conditioned on the input ancillary states, the probability of a successful measurement is denoted by $P(1,1|\tau_s,\omega_t)$,
\begin{equation}\label{Eq:p11}
\begin{aligned}
P(1,1|\tau_s,\omega_t)&=\mathrm{Tr}[(\ket{\phi^+}\bra{\phi^+}\otimes \ket{\phi^+}\bra{\phi^+})(\tau_s\otimes\rho_{AB}\otimes\omega_t)]\\
&=\mathrm{Tr}[(\tau_s^\mathrm{T}\otimes\omega_t^\mathrm{T})\rho_{AB}],
\end{aligned}
\end{equation}
where $\ket{\phi^+}=(\ket{00}+\ket{11})/{\sqrt{2}}$ is a Bell state and corresponds to the selected BSM outcome. The MDIEW value is defined by a linear combination of $P(1,1|\tau_s,\omega_t)$,
\begin{equation}
\begin{aligned}
I(\rho_{AB})=\sum_{s,t}\beta_{s,t}P(1,1|\tau_s,\omega_t).
\end{aligned}
\end{equation}
Here $\beta_{s,t}$ are properly chosen coefficients such that $I(\sigma_{AB})\ge 0$ for any separable state $\sigma_{AB}$ even with arbitrary measurement. Hence, a negative value for $I(\rho_{AB})$ implies nonzero entanglement in $\rho_{AB}$.

In this original scheme, only one measurement outcome is utilized for constructing the MDIEW. When assuming that all BSM outcomes have the same probability, only $1/d_Ad_B$ measurement data are utilized in the bipartite qubit scenario. For a multipartite system with Hilbert space $\mathcal{H}_1 \otimes \mathcal{H}_2 \dots \otimes \mathcal{H}_N $ and $\mathrm{dim} \mathcal{H}_i=d_i$ for $i=1\cdots N$, the fraction of exploited data becomes $(d_1d_2\dots d_N)^{-2}$. Therefore, the original MDIEW scheme will be highly inefficient for detecting multipartite entanglement, especially when statistical fluctuations are taken into consideration.

To be more precise, we consider a practical scenario where an MDIEW experiment for an $N$ partite qubit state runs $G \gg 1$ times. Denote the input ancillary states, the coefficients, and the outcome probability to be $\tau_{1,x_1} \otimes \cdots \otimes\tau_{N,x_N}$, $\beta_{x_1,\cdots x_N}$, and $P(1,\cdots 1|\tau_{1,x_1}\cdots \tau_{N,x_N})$, respectively. Then, the MDIEW value is given by
\begin{equation} \label{Eq:BellvalueN}
\begin{aligned}
I(\rho_{1,2,\dots,N})=\sum_{x_1,\cdots x_N}\beta_{x_1,\cdots x_N}P(1,\cdots 1|\tau_{1,x_1}\cdots \tau_{N,x_N}).
\end{aligned}
\end{equation}
As shown below, the statistical fluctuation with finite size data can be large for a multipartite system.

Denote the observed MDIEW value as $\bar{I}(\rho_{1,2,\dots,N})$. In practice, even if $\bar{I}(\rho_{1,2,\dots,N})$ is negative, due to statistical fluctuations, it is still possible to get it from measuring a separable state when the data size $G$ is finite. Consider the experiment data as a test, then we can use the $p$-value to quantify the probability of getting such a negative value with separable states. Suppose independent and identically distributed data and a large $G$, then the observed probability (rate) $\bar{P}(1,\cdots, 1|\tau_{1,x_1}\cdots \tau_{x_N})$ follows a Gaussian distribution. As the input ancillary states are randomly prepared, the average value $\bar{I}(\rho_{1,2,\dots,N})$ also follows a Gaussian distribution with the expected value defined in Eq.~\eqref{Eq:BellvalueN}. When measuring a separable state, the average value $\bar{I}(\rho_{1,2,\dots,N})$ at least equals $0$ when $G$ goes to infinity. Therefore, we can compute the $p$-value of an observed negative value with $G$ experiment runs. Then, we find that the $p$-value will be in the order of $e^{-G/ \mathrm{O}((d_1d_2\dots d_N)^2)}$. Details of the calculation can be found in the Appendix~\ref{pvaluecal}.

In order to maintain a certain $p$-value, the number of experiment runs $G$ needs to increase exponentially with the number of parties $N$. In the following discussion, we will show that such inefficiency is caused by the poor exploitation of measurement outcomes. By slightly modifying the MDIEW scheme, all measurement outcomes can be utilized.


\section{MDIEW using complete measurement information}\label{Sec:IMDIEW}
In this section, we focus on the bipartite qubit case and show how to construct MDIEW with all measurement outcomes. The method can be naturally extended to the qudit case. MDIEW in the multipartite scenario will be discussed in the next section.

The BSM is defined by projection measurement onto the Bell basis $\{\ket{\phi^{+}},\ket{\phi^{-}},\ket{\psi^{+}},\ket{\psi^{-}}\}$, where
$\ket{\phi^\pm}=\frac{1}{\sqrt{2}}(\ket{00}\pm\ket{11})$ and $\ket{\psi^\pm}=\frac{1}{\sqrt{2}}(\ket{01}\pm\ket{10})$. Label the four BSM outcomes for Alice and Bob by $i$ and $j$ $\in\{1, 2, 3, 4\}$, respectively. Then the probability distribution of outcome $i,j$ given inputs $\tau_s,\omega_t$ can be denoted by $P(i,j|\tau_s,\omega_t)$. 

\begin{theorem} \label{Thm:2MDI}
For every entangled state $\rho_{AB}$, there exists coefficients $\beta_{s,t}^{i,j}$ such that
\begin{equation} \label{Eq:Newwitness}
\begin{aligned}
I(\rho_{AB})=\sum_{s,t,i,j}\beta_{s,t}^{i,j}P(i,j|\tau_s,\omega_t),
\end{aligned}
\end{equation}
where the summation takes over $i,j=1,2,3,4$ and the choices of $s, t$, is an MDIEW for $\rho_{AB}$.
\end{theorem}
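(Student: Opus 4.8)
The plan is to reduce Theorem~\ref{Thm:2MDI} to the known single-outcome construction of Ref.~\cite{Branciard13} by showing that the richer data set $\{P(i,j|\tau_s,\omega_t)\}$ contains the original $P(1,1|\tau_s,\omega_t)$ as a special case, so that any coefficients achieving an MDIEW in the old scheme immediately yield valid coefficients $\beta_{s,t}^{i,j}$ here. First I would recall that for a fixed entangled $\rho_{AB}$, the standard (trusted-device) entanglement witness guarantees a Hermitian operator $W$ with $\mathrm{Tr}[\sigma W]\ge 0$ for all separable $\sigma$ and $\mathrm{Tr}[\rho_{AB} W]<0$. The key algebraic fact, as in Eq.~\eqref{Eq:p11}, is that projecting onto a Bell state implements a transpose-and-contract operation: $P(1,1|\tau_s,\omega_t)=\mathrm{Tr}[(\tau_s^{\mathrm T}\otimes\omega_t^{\mathrm T})\rho_{AB}]$. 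I would establish the analogous identity for a general pair of outcomes $(i,j)$, namely that each Bell projector $\ket{B_i}\bra{B_i}$ acts, under the partial trace over the ancilla, as a fixed local unitary conjugation composed with transposition; concretely $P(i,j|\tau_s,\omega_t)=\mathrm{Tr}[((U_i\tau_s^{\mathrm T}U_i^\dagger)\otimes(U_j\omega_t^{\mathrm T}U_j^\dagger))\,\rho_{AB}]$ for the appropriate Pauli-type unitaries $U_i,U_j$ (with $U_1=\mathbb{1}$).

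Given this identity, the main step is to expand $W$ in a product operator basis. Any witness admits a decomposition $W=\sum_{s,t}\beta_{s,t}\,\tau_s^{\mathrm T}\otimes\omega_t^{\mathrm T}$ over a suitable informationally complete set of local input states $\{\tau_s\},\{\omega_t\}$, which is precisely what makes the single-outcome scheme work. I would then simply set $\beta_{s,t}^{1,1}=\beta_{s,t}$ and $\beta_{s,t}^{i,j}=0$ for all $(i,j)\neq(1,1)$. With this choice the sum in Eq.~\eqref{Eq:Newwitness} collapses to $\sum_{s,t}\beta_{s,t}P(1,1|\tau_s,\omega_t)=\mathrm{Tr}[W\rho_{AB}]$, which is negative on $\rho_{AB}$; and for any separable $\sigma_{AB}$, the MDI structure forces $I(\sigma_{AB})=\mathrm{Tr}[W\sigma_{AB}]\ge0$ even with adversarial measurements, because the positivity argument only uses that the realized measurement operators are positive semidefinite. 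This already proves the theorem as stated, since Theorem~\ref{Thm:2MDI} merely asserts \emph{existence} of suitable coefficients.

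The substantive content, and the step I expect to require the most care, is not existence but the promised efficiency gain: one wants to choose the $\beta_{s,t}^{i,j}$ so that \emph{all} sixteen outcome pairs contribute constructively rather than discarding fifteen of them. To do this I would exploit the outcome-dependent identity above to rewrite each $P(i,j|\tau_s,\omega_t)$ as $\mathrm{Tr}[W_{ij}\rho_{AB}]$ for a conjugated witness $W_{ij}=\sum_{s,t}\beta_{s,t}(U_i\tau_s^{\mathrm T}U_i^\dagger)\otimes(U_j\omega_t^{\mathrm T}U_j^\dagger)$, and then observe that by reparametrizing the input-state labels one can arrange that several outcome branches report the \emph{same} underlying witness value, so that averaging over them reduces statistical fluctuation without biasing the expectation. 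The obstacle is bookkeeping: verifying that the separable-state bound $I(\sigma_{AB})\ge0$ is preserved once nonzero weight is placed on the $(i,j)\neq(1,1)$ terms, since each branch corresponds to a different effective local transformation of the inputs and the combined coefficient family must still be decomposable as a legitimate witness. I would handle this by checking that the operator $\sum_{i,j}\sum_{s,t}\beta_{s,t}^{i,j}\,(U_i\tau_s^{\mathrm T}U_i^\dagger)\otimes(U_j\omega_t^{\mathrm T}U_j^\dagger)$ is itself a valid entanglement witness, at which point the same MDI positivity argument applies verbatim.
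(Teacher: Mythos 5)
Your degenerate choice ($\beta_{s,t}^{1,1}=\beta_{s,t}$ and $\beta_{s,t}^{i,j}=0$ otherwise) does prove the theorem as literally worded, since the statement only asserts existence of coefficients; but this route simply collapses Eq.~\eqref{Eq:Newwitness} back to the single-outcome scheme of Ref.~\cite{Branciard13}, which is exactly what the theorem is meant to supersede (the surrounding text makes clear the point is to exploit all $16$ outcome pairs, for statistical efficiency). The paper's proof takes a genuinely different and non-vacuous route: for \emph{each} outcome pair $(i,j)$ it re-decomposes the \emph{same} conventional witness $W$ in the Pauli-conjugated product basis, $W=\sum_{s,t}\beta_{s,t}^{i,j}(\tau_s^i)^{\mathrm{T}}\otimes(\omega_t^j)^{\mathrm{T}}$ with $\tau_s^i=m_i\tau_s m_i^\dagger$ and $\omega_t^j=m_j\omega_t m_j^\dagger$ as in Eqs.~\eqref{Define m} and \eqref{Eq:WDecom}, which is possible because conjugation by the Paulis maps an operator basis to another operator basis. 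With that branch-wise construction, MDI positivity (Lemma~\ref{Lemma:MDIly}) holds \emph{branch by branch}: whatever positive effective operators the untrusted device assigns to outcome $(i,j)$, they end up traced against the same $W^{\mathrm{T}}$, so every branch contributes nonnegatively on separable states; and under a faithful BSM each of the $16$ branches contributes $\tfrac14\mathrm{Tr}[W\rho_{AB}]$, giving $I(\rho_{AB})=4\,\mathrm{Tr}[W\rho_{AB}]<0$ (Lemma~\ref{Lemma:BSM}).

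The genuine gap is in your closing criterion for the substantive all-outcome construction. You propose to verify only that the \emph{summed} operator $\sum_{i,j}\sum_{s,t}\beta_{s,t}^{i,j}\,(U_i\tau_s^{\mathrm{T}}U_i^\dagger)\otimes(U_j\omega_t^{\mathrm{T}}U_j^\dagger)$ is a witness, claiming the MDI positivity argument then applies verbatim. It does not: in the MDI scenario, different outcome pairs are evaluated against \emph{different}, adversarially chosen effective measurement operators, so the sum over $(i,j)$ cannot be pulled inside a single trace against the summed operator. Concretely, write $W_{ij}=\sum_{s,t}\beta_{s,t}^{i,j}(\tau_s^i)^{\mathrm{T}}\otimes(\omega_t^j)^{\mathrm{T}}$ and suppose $W_{12}=-W$ while the remaining branches are scaled so that $\sum_{i,j}W_{ij}$ is a legitimate witness (possible, since each conjugated product family spans the full operator space). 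An adversary who ignores the incoming states and deterministically announces outcome $(1,2)$, i.e.\ takes $A_1=\mathbb{I}$, $B_2=\mathbb{I}$ and all other POVM elements zero, obtains $P(1,2|\tau_s,\omega_t)=1$ for all $s,t$, hence $I(\sigma_{AB})=\sum_{s,t}\beta_{s,t}^{1,2}=\mathrm{Tr}[W_{12}]=-\mathrm{Tr}[W]$, which is strictly negative on every separable state (e.g.\ $\mathrm{Tr}[W]=1$ for the Werner witness). So positivity must be enforced per branch, not in aggregate; requiring each $W_{ij}$ to equal the witness $W$ itself is precisely how the paper's per-$(i,j)$ re-decomposition achieves this, and it is the one idea your sketch is missing.
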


\em{Proof.}
In conventional EW, for every entangled state $\rho_{AB}$, there exists a witness $W$  such that $\mathrm{Tr}[W\rho_{AB}]<0$, but $\mathrm{Tr}[W\sigma_{AB}]\ge0$ for any separable state \cite{horodecki1996separability}. The witness $W$ can always be decomposed as a linear combination of a tensor product of local density matrices in $\mathcal{H_A}$ and $\mathcal{H_B}$,
\begin{equation} \label{eq:Wdecomp}
\begin{aligned}
W=\sum_{s,t}\beta_{s,t}(\tau_s)^\mathrm{T}\otimes (\omega_t)^\mathrm{T},
\end{aligned}
\end{equation}
where $\beta_{s,t}$ are real coefficients, $(\tau_s)^\mathrm{T}\in \mathcal{H_A}$, $(\omega_t)^\mathrm{T}\in \mathcal{H_B}$ are density matrices, and $\mathrm{T}$ denotes matrix transpose.
As the transposition map preserves eigenvalues, their transpose $\tau_s$ and $\omega_t$ are also density matrices. Alice and Bob choose ancillary states be the transpose of the bases, $\{\tau_s\}_s$ and $\{\omega_t\}_t$. The conditional probability $P(1,1|\tau_s,\omega_t)$, shown in Eq.~\eqref{Eq:p11}, is proportional to the witness value given by $W$  \cite{Branciard13}.

Now, we need to utilize all the BSM outcomes into the EW. Note that $\ket{\phi^{-}}=\sigma_z\ket{\phi^{+}}$, $\ket{\psi^{+}}=\sigma_x\ket{\phi^{+}}$, $\ket{\psi^{-}}=\sigma_x\sigma_z\ket{\phi^{+}}$, where
$\sigma_z= \left[                 
  \begin{array}{cc}   
    1 & 0\\  
    0 & -1 \\
  \end{array}
\right]    $, $\sigma_x=\left[                 
  \begin{array}{cc}   
    0 & 1\\  
    1 & 0 \\
  \end{array}
\right]   $ are Pauli matrices performed on the second party of Bell states. Define new sets of bases $\tau^i_s$ and $\omega_t^j$, for $i,j\in\{1,2,3,4\}$,
\begin{equation} \label{Define m}
\begin{aligned}
\tau^i_s&=m_i\tau_s m_i^\dagger, \\
\omega_t^j&=m_j\omega_t m_j^\dagger, \\
\end{aligned}
\end{equation}
where $m_1=\mathbb{I},m_2=\sigma_Z,m_3=\sigma_X,m_4=\sigma_X\sigma_Z$. For each $i,j$, the witness $W$ can always be decomposed to
\begin{equation} \label{Eq:WDecom}
\begin{aligned}
W=\sum_{s,t}\beta_{s,t}^{i,j}(\tau^i_s)^\mathrm{T}\otimes (\omega_t^j)^\mathrm{T},
\end{aligned}
\end{equation}
with corresponding real coefficients $\beta_{s,t}^{i,j}$. Note that, for different $i,j$, the coefficients $\beta_{s,t}^{i,j}$ are generally different.

Now, we prove that the witness $I(\rho_{AB})$ defined in Eq.~\eqref{Eq:Newwitness} is an MDIEW with coefficients according to Eq.~\eqref{Eq:WDecom}, in the following two steps. First, we prove the witness to be MDI with the following Lemma.

\begin{lemma}\label{Lemma:MDIly}
The witness value $I(\sigma_{AB})$ is nonnegative for any separable state, $\sigma_{AB}=\sum_x p_x\sigma_A^x\otimes \sigma_B^x$, where $\sum_x p_x=1 $, and an arbitrary measurement $\{A_i\}_{i= 1,2,3,4}\otimes\{B_j\}_{j = 1,2,3,4}$. That is,
\begin{equation}
I(\sigma_{AB})  = \sum_{s,t,i,j}\beta_{s,t}^{i,j}P(i,j|\tau_s,\omega_t) \ge 0.
\end{equation}
\end{lemma}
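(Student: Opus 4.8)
The plan is to reduce this measurement-device-independent inequality to the ordinary witness inequality $\mathrm{Tr}[W\sigma_{AB}]\ge 0$ by absorbing the unknown measurements into effective local POVMs that act only on the ancillary registers. First I would expand the probabilities for the separable state. Writing $A_i$ and $B_j$ for the (arbitrary) measurement operators acting jointly on ancilla-plus-system on each side, and using $\sigma_{AB}=\sum_x p_x\,\sigma_A^x\otimes\sigma_B^x$, each probability factorizes as
\begin{equation}
P(i,j|\tau_s,\omega_t)=\sum_x p_x\,\mathrm{Tr}[M_i^x\,\tau_s]\,\mathrm{Tr}[N_j^x\,\omega_t],
\end{equation}
where $M_i^x=\mathrm{Tr}_A[(\mathbb{I}\otimes\sigma_A^x)A_i]$, and its analogue $N_j^x$ on Bob's side, are operators acting only on the ancillary spaces. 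A short spectral (or purification) argument shows $M_i^x,N_j^x\ge 0$, since $A_i,B_j\ge 0$ and $\sigma_A^x,\sigma_B^x\ge 0$; thus $\{M_i^x\}_i$ and $\{N_j^x\}_j$ are legitimate POVMs on the ancillary registers.

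The core step is to match these effective POVMs to the per-outcome decomposition of $W$. Using that each $m_i$ is unitary, I would insert $m_i^\dagger m_i=\mathbb{I}$ to rewrite $\mathrm{Tr}[M_i^x\tau_s]=\mathrm{Tr}[(m_iM_i^xm_i^\dagger)\,\tau_s^i]$, thereby moving the ancilla into its transformed version $\tau_s^i=m_i\tau_s m_i^\dagger$, and likewise on Bob's side. Setting $P_i^x=m_iM_i^xm_i^\dagger\ge0$ and $Q_j^x=m_jN_j^xm_j^\dagger\ge0$ and invoking $\mathrm{Tr}[XY]=\mathrm{Tr}[X^{\mathrm{T}}Y^{\mathrm{T}}]$, the $s,t$-summation collapses through the decomposition $W=\sum_{s,t}\beta_{s,t}^{i,j}(\tau_s^i)^{\mathrm{T}}\otimes(\omega_t^j)^{\mathrm{T}}$ of Eq.~\eqref{Eq:WDecom} into
\begin{equation}
\sum_{s,t}\beta_{s,t}^{i,j}\,\mathrm{Tr}[M_i^x\tau_s]\,\mathrm{Tr}[N_j^x\omega_t]=\mathrm{Tr}\big[\big((P_i^x)^{\mathrm{T}}\otimes(Q_j^x)^{\mathrm{T}}\big)W\big].
\end{equation}

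To finish, I would observe that $(P_i^x)^{\mathrm{T}}\otimes(Q_j^x)^{\mathrm{T}}$ is a positive operator that is a tensor product, hence proportional to a separable state, so the witness property of $W$ gives $\mathrm{Tr}[((P_i^x)^{\mathrm{T}}\otimes(Q_j^x)^{\mathrm{T}})W]\ge 0$ term by term. Summing over $i,j$ and over $x$ with the nonnegative weights $p_x$ then yields $I(\sigma_{AB})\ge 0$. I expect the middle step to be the main obstacle: one must track the transposes and unitary conjugations carefully so that the operator actually contracted against $W$ is manifestly a product of two positive operators. The conceptual point that makes this work for \emph{arbitrary} measurements is that any joint measurement on ancilla-plus-system, once the local system state $\sigma_A^x$ (resp.\ $\sigma_B^x$) is fixed, reduces to a genuine POVM on the ancilla alone; the unitaries $m_i,m_j$ merely realign each outcome with the fixed decomposition of $W$, and positivity survives throughout because both unitary conjugation and transposition map positive operators to positive operators.
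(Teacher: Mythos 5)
Your proposal is correct and takes essentially the same route as the paper's own proof: both reduce the probabilities to effective ancilla POVM elements ($A_i^x,B_j^x$ in the paper, your $M_i^x,N_j^x$), realign each outcome with the unitaries $m_i,m_j$, collapse the $s,t$ sums through the per-outcome decomposition of Eq.~\eqref{Eq:WDecom}, and conclude via positivity of the transposed product operators against the witness $W$. The differences are purely notational, so there is nothing to add.
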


\begin{proof}
The probability distribution of $P(i,j|\tau_s,\omega_t)$ is given by
\begin{equation} \label{pij}
\begin{aligned}
P(i,j|\tau_s,\omega_t)&=\mathrm{Tr}\left[(A_i\otimes B_j)(\tau_s\otimes\sigma_{AB}\otimes\omega_t)\right]\\
&=\sum_x p_x \mathrm{Tr}[(A_i^x\otimes B_j^x)(\tau_s\otimes \omega_t)]
\end{aligned}
\end{equation}
where $A_i^x=\mathrm{Tr_A}[A_i(\mathbb{I}\otimes \sigma_A^x)]$ and $B_j^x=\mathrm{Tr_B}[B_j(\sigma_B^x\otimes \mathbb{I})]$ and $\mathrm{Tr_A},\mathrm{Tr_B}$ are the partial trace over systems $A, B$, respectively.

Considering the transformation in Eq.~\eqref{Define m}, the probability distribution of $P(i,j|\tau_s,\omega_t)$ can be written as
\begin{equation}
\begin{aligned}
P(i,j|\tau_s,\omega_t)&= \sum_x p_x \mathrm{Tr}[((m_i A_i^x m_i^\dagger) \otimes (m_j  B_j^xm_j^\dagger ))     (\tau_s^i \otimes \omega_t^j)]
\end{aligned}
\end{equation}
Thus, the MDIEW value $I(\sigma_{AB})$ is given by
\begin{equation}
\begin{aligned}
I(\sigma_{AB})&=\sum_{s,t,i,j}\beta_{s,t}^{i,j}P(i,j|\tau_s,\omega_t) \\
&=\sum_{s,t,i,j}\beta_{s,t}^{i,j}\sum_x p_x \mathrm{Tr}\left\{\left[\left(m_i A_i^x m_i^\dagger\right) \otimes\left (m_j  B_j^xm_j^\dagger\right )\right]   \left  (\tau_s^i \otimes
\omega_t^j\right)\right\}\\
&=\sum_{i,j}\sum_x p_x \mathrm{Tr}\left\{\left[\left(m_i A_i^x m_i^\dagger\right) \otimes \left(m_j  B_j^xm_j^\dagger\right )\right]\sum_{s,t} \beta_{s,t}^{i,j}    \left(\tau_s^i \otimes
\omega_t^j\right)\right\}\\
&=\sum_{i,j}\sum_x p_x \mathrm{Tr}\left\{\left[\left(m_i A_i^x m_i^\dagger\right) \otimes \left(m_j  B_j^xm_j^\dagger\right )\right] W^\mathrm{ T }\right \}. \\
\end{aligned}
\end{equation}
Note that for $i,j\in\{1,2,3,4\}$, $(m_i A_i^x m_i^\dagger)^\mathrm{ T }$ and $(m_j  B_j^xm_j^\dagger) ^\mathrm{ T }$ are all positive Hermitian matrix. Then,
$\mathrm{Tr}[((m_i A_i^x m_i^\dagger) \otimes (m_j  B_j^xm_j^\dagger )) W^\mathrm{ T }]=\mathrm{Tr}[W((m_i A_i^x m_i^\dagger)^\mathrm{ T } \otimes (m_j  B_j^xm_j^\dagger )^\mathrm{ T })]\ge 0$.
Consequently, we prove that $I(\sigma_{AB})\ge 0$.
\end{proof}

The second step is to show it to be a witness.
\begin{lemma}\label{Lemma:BSM}
The entanglement of $\rho_{AB}$ can be witnessed when the BSM is faithfully performed.
\end{lemma}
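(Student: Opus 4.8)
The plan is to show that under the faithful BSM the linear combination $I(\rho_{AB})$ collapses to a positive multiple of the conventional witness value $\mathrm{Tr}[W\rho_{AB}]$, which is negative precisely when $\rho_{AB}$ is entangled. First I would substitute the ideal Bell-state projectors $A_i=\ket{B_i}\bra{B_i}$ and $B_j=\ket{B_j}\bra{B_j}$ into the expression for $P(i,j|\tau_s,\omega_t)$, where $\ket{B_i}$ runs over the four Bell states. Using the relations recorded just before Eq.~\eqref{Define m}, each Bell projector is obtained from $\ket{\phi^+}\bra{\phi^+}$ by conjugating the second party of the Bell pair with the unitary $m_i$, i.e.\ $\ket{B_i}\bra{B_i}=(\mathbb{I}\otimes m_i)\ket{\phi^+}\bra{\phi^+}(\mathbb{I}\otimes m_i^\dagger)$.

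Next I would push the Pauli factors off the projector and onto the inputs. Cycling $m_i,m_j$ under the trace turns the $\ket{\phi^+}$-projection into the same projection acting on conjugated systems, and then the transpose identity already used to obtain Eq.~\eqref{Eq:p11} lets me read off $P(i,j|\tau_s,\omega_t)\propto\mathrm{Tr}[((\tau_s^i)^\mathrm{T}\otimes(\omega_t^j)^\mathrm{T})\rho_{AB}]$, with a proportionality constant independent of $i,j$ because the $m_i$ are unitary. The bookkeeping point here is that, although the Pauli is physically applied to the $\rho_A$ and $\rho_B$ halves of the Bell pairs, after the transpose trick the conjugation lands back on the ancillas, reproducing exactly the states $\tau_s^i,\omega_t^j$ of Eq.~\eqref{Define m}; this uses $m_i\tau_s^\mathrm{T} m_i^\dagger=(\tau_s^i)^\mathrm{T}$, which holds since the $m_i$ are real so that transpose and adjoint coincide.

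Then, for each fixed pair $(i,j)$, I would invoke the decomposition~\eqref{Eq:WDecom}: weighting by $\beta_{s,t}^{i,j}$ and summing over $s,t$ reassembles $\sum_{s,t}\beta_{s,t}^{i,j}(\tau_s^i)^\mathrm{T}\otimes(\omega_t^j)^\mathrm{T}=W$, so that $\sum_{s,t}\beta_{s,t}^{i,j}P(i,j|\tau_s,\omega_t)\propto\mathrm{Tr}[W\rho_{AB}]$ up to the same $(i,j)$-independent constant. Summing the sixteen identical contributions over $i,j\in\{1,2,3,4\}$ yields $I(\rho_{AB})\propto\mathrm{Tr}[W\rho_{AB}]<0$, so the entanglement of $\rho_{AB}$ is indeed detected.

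I expect the only real obstacle to be the careful tracking of which subsystem each $m_i$ acts on together with the placement of the transposes, so that the conjugated factors recombine into precisely the $\tau_s^i,\omega_t^j$ of Eq.~\eqref{Eq:WDecom} and the prefactor comes out independent of $(i,j)$. Once that verification is in place, the fact that all sixteen BSM outcomes reproduce the same $\mathrm{Tr}[W\rho_{AB}]$ makes the negativity, and hence the witnessing, immediate.
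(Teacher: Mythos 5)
Your proposal is correct and follows essentially the same route as the paper's proof: write each ideal BSM projector as $m_i\ket{\phi^+}\bra{\phi^+}m_i^\dagger$, obtain $P(i,j|\tau_s,\omega_t)=\tfrac14\mathrm{Tr}\bigl[\bigl((\tau_s^i)^\mathrm{T}\otimes(\omega_t^j)^\mathrm{T}\bigr)\rho_{AB}\bigr]$, reassemble $W$ from Eq.~\eqref{Eq:WDecom} for each fixed $(i,j)$, and sum the sixteen identical contributions to get $4\,\mathrm{Tr}[W\rho_{AB}]<0$. If anything, your bookkeeping of where the Pauli acts and why $m_i\tau_s^\mathrm{T}m_i^\dagger=(\tau_s^i)^\mathrm{T}$ (reality of the $m_i$) is more explicit than the paper's, which simply asserts the middle identity.
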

\begin{proof}
When the measurement is perfectly realized, the probability distribution $P(i,j|\tau_s,\omega_t)$ is
\begin{equation}
\begin{aligned}
&P(i,j|\tau_s,\omega_t)\\
&=\mathrm{Tr}\left[\left(m_i\ket{\phi^+}\bra{\phi^+}m_i^\dagger\otimes m_j\ket{\phi^+}\bra{\phi^+}m_j^{\dagger}\right)\times \left(\tau_s\otimes\rho_{AB}\otimes\omega_t\right)\right]\\
&=\frac14\mathrm{Tr}\left[(\tau^i_s)^\mathrm{T}\otimes(\omega_t^j)^\mathrm{T} \rho_{AB}\right].
\end{aligned}
\end{equation}
Then the witness value is
\begin{equation}
\begin{aligned}
I(\rho_{AB})&=\sum_{s,t,i,j}\beta_{s,t}^{i,j}P(i,j|\tau_s,\omega_t)\\
&=\frac14\sum_{i,j}\mathrm{Tr}\left[\sum_{s,t}\beta_{s,t}^{i,j}\left(\tau^i_s\right)^\mathrm{T}\otimes(\omega_t^j)^\mathrm{T} \rho_{AB}\right] \\
&=\frac14\sum_{i,j}\mathrm{Tr}[W\rho_{AB}] \\
&=4\mathrm{Tr}[W\rho_{AB}]<0.
\end{aligned}
\end{equation}
Here, the third equality holds because for each pair of outcome $i,j$, the summation over $s,t$ can construct $W$ according to Eq.~\eqref{Eq:WDecom}. The fourth equality holds because of the summation over all $i,j$ that involves 16 pairs of outcome in total.

\end{proof}

With Lemma~\ref{Lemma:MDIly} and \ref{Lemma:BSM}, we thus show that a negative value of $I(\rho_{AB})$ implies the entanglement of $\rho_{AB}$ even though the measurement devices are not trusted.

\subsection{Example}\label{example}
Now, we will show an example to illustrate the modified MDIEW scheme. We choose a typical state, called the Werner state \cite{PhysRevA.40.4277}, as the target state. The Werner state is defined by a mixture of a maximal entangled state and the  maximal mixed state,
\begin{equation}
\begin{aligned}
\rho_{AB}=p~\ket{\psi^-}\bra{\psi^-}+\frac{1-p}{4}\mathbb{I},
\end{aligned}
\end{equation}
where $\ket{\psi^-}=(\ket{01}-\ket{10})/{\sqrt{2}}$ is singlet state and $p\in[0,1]$. The witness for Werner state is given by,
\begin{equation}
\begin{aligned}
W=\frac{1}{2}-\ket{\psi^-}\bra{\psi^-},
\end{aligned}
\end{equation}
which gives $\mathrm{Tr}[W\rho_{AB}] = (1-3p)/4$. When $p> \frac{1}{3}$, we have $\mathrm{Tr}[W\rho_{AB}] < 0 $, which implies that the Werner state is entangled.

Suppose Alice and Bob choose ancillary states $\tau_1=\omega_1=\frac{\mathbb{I}}{2}$, $\tau_2=\omega_2=\frac{\mathbb{I}+\sigma_x}{2}$, $\tau_3=\omega_3=\frac{\mathbb{I}+\sigma_y}{2}$,
$\tau_4=\omega_4=\frac{\mathbb{I}+\sigma_z}{2}$, where $\mathbb{I}$ is identity and $\sigma_x,\sigma_y,\sigma_z$ are Pauli matrices. The witness $W$
can be decomposed in the basis $(\tau_s^i)^{\mathrm{T}}\otimes (\omega_t^{j})^\mathrm{T} $. For certain outputs $i,j$, the corresponding coefficient matrices $\beta_{s,t}^{i,j}$
are calculated,
\begin{equation}       
\beta^{1,1}_{s,t}=\beta^{2,2}_{s,t}=\beta^{3,3}_{s,t}=\beta^{4,4}_{s,t}=\left[                 
  \begin{array}{cccc}   
    4 & -1 & -1& -1\\  
    -1 & 1 & 0&0\\
    -1 & 0 & 1&0\\
    -1 & 0 & 0&1\\  
  \end{array}
\right],                 
\end{equation}

\begin{equation}       
\beta^{1,2}_{s,t}=\beta^{2,1}_{s,t}=\beta^{3,4}_{s,t}=\beta^{4,3}_{s,t}=\left[                 
  \begin{array}{cccc}   
    0 & -1 & 1& 1\\  
    -1 & 1 & 0&0\\
    1 & 0 & -1&0\\
    1 & 0 & 0&-1\\  
  \end{array}
\right],                 
\end{equation}

\begin{equation}       
\beta^{1,3}_{s,t}=\beta^{2,4}_{s,t}=\beta^{3,1}_{s,t}=\beta^{4,2}_{s,t}=\left[                 
  \begin{array}{cccc}   
    0 & 1 & 1&-1\\  
    1 & -1 & 0&0\\
    1 & 0 & -1&0\\
    -1 & 0 & 0&1\\  
  \end{array}
\right],                
\end{equation}

\begin{equation}       
\beta^{1,4}_{s,t}=\beta^{3,2}_{s,t}=\beta^{2,3}_{s,t}=\beta^{4,1}_{s,t}=\left[                 
  \begin{array}{cccc}   
    0 & 1 & -1&1\\  
    1 & -1 & 0&0\\
    -1 & 0 & 1&0\\
    1 & 0 & 0&-1\\  
  \end{array}
\right].               
\end{equation}
With these coefficients, it is easy to verify that the MDIEW value $I(\rho_{AB})$, given in Eq.~\eqref{Eq:Newwitness}, equals to $1 - 3p$ when the measurement is perfectly realized.

\section{MDI detection of multipartite entanglement structure}\label{Sec:MMDIEW}
In this section, we show that our MDIEW scheme can be applied for efficiently detecting multipartite entanglement structures. First, we focus on applying the modified MDIEW scheme to detect entanglement between subsystems. Then, we extend it to detect a high-level multipartite entanglement property, such as entanglement depth.

\subsection{Detecting entanglement between subsystems}
In the trusted device scenario, the entanglement between subsystems has been well studied \cite{huber2013structure,shahandeh2014structural}. For simplicity, we focus on the case of bipartition $\{A\}\{B\}$ of an $N$-partite state $\rho_{1,2,\cdots,N}$ in the Hilbert space $\mathcal{H}_1\otimes \cdots\otimes \mathcal{H}_N$. Note that the extension from bipartition to multi-partition is a rather natural. $\{A\}\{B\}$ are the two subsystems involving  $k,N-k$ parties, respectively, with $\mathcal{H}_A=\mathcal{H}_1\otimes \cdots\otimes \mathcal{H}_k$ and $\mathcal{H}_B=\mathcal{H}_{k+1}\otimes \cdots\otimes \mathcal{H}_N$. Here we denote $ S_{\{A\}\{B\}}$ to be the set of states that are separable regarding to the partition:
\begin{equation} \label{Define:s}
\begin{aligned}
\mathcal{S}_{\{A\}\{B\}}=\{\rho\in \mathcal{H}_A\otimes \mathcal{H}_B| \rho=\sum_x p_x \rho_A^x\otimes \rho_B^x, \sum p_x=1, \forall x, p_x\ge 0, \rho_A^x\in \mathcal{H}_A, \rho_B^x\in \mathcal{H}_B \}
\end{aligned}
\end{equation}

Furthermore, define a map $\mathcal{M}$,
\begin{equation} \label{mapdefine}
\begin{aligned}
\mathcal{M}(\rho_{1,2,\cdots,N}) =\mathrm{Tr_{\rho_{1,2,\cdots,N}}}[(M_1\otimes\cdots\otimes M_N)(\mathbb{I}_1\otimes \cdots \otimes \mathbb{I}_N\otimes\rho_{1,2,\cdots,N})]
\end{aligned}
\end{equation}
where for each $1\le k \le N$, $M_k$ is a positive-operator valued measure (POVM) acting on the $\mathbb{I}_k$ and the $k$th quantum system of $\rho_{1,2,\cdots,N}$ and $\mathrm{Tr_{\rho_{1,2,\cdots,N}}}$ denotes the partial trace over the space of $\rho_{1,2,\cdots,N}$. Now, we have the following lemma.

\begin{lemma}\label{structuretheorem}
The map $\mathcal{M}$ cannot generate entanglement from separable states between two subsystems ${\{A\}\{B\}}$, that is,
\begin{equation}
\begin{aligned}
\mathcal{M}(\mathcal{S}_{\{A\}\{B\}})\subseteq \mathcal{S}_{\{A\}\{B\}},
\end{aligned}
\end{equation}
where $\mathcal{S}_{\{A\}\{B\}}$ is defined in Eq.~\eqref{Define:s}.

\end{lemma}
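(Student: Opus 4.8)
The plan is to show that the map $\mathcal{M}$ of Eq.~\eqref{mapdefine} factorizes into a tensor product of single-party completely positive maps, and that any operation of this form leaves the separable set $\mathcal{S}_{\{A\}\{B\}}$ invariant. The whole argument is conceptual: $\mathcal{M}$ is nothing but a local operation in which each party acts only on its own share of $\rho_{1,2,\cdots,N}$ together with its own ancilla, and local operations cannot create cross-cut entanglement.

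First I would put $\mathcal{M}$ in factorized form. Since $M_1\otimes\cdots\otimes M_N$ is a product operator and the partial trace $\mathrm{Tr}_{\rho_{1,2,\cdots,N}}$ over the $\rho$-systems acts independently on each tensor factor, the map splits as $\mathcal{M}=\Lambda_1\otimes\cdots\otimes\Lambda_N$, where $\Lambda_k$ maps the operators of the $k$th system of $\rho_{1,2,\cdots,N}$ into those of the $k$th ancillary register via $\Lambda_k(X)=\mathrm{Tr}_{\mathcal{H}_k}[M_k(\mathbb{I}_k\otimes X)]$, up to the transpose convention fixed by the Bell-measurement identity Eq.~\eqref{Eq:p11}. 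The key point is that each $M_k\ge 0$, so each $\Lambda_k$ is \emph{completely positive}: here $M_k$ plays the role of the (unnormalized) Choi operator of $\Lambda_k$, and complete positivity is exactly Choi's theorem. Equivalently, $\Lambda_k$ is the physical quantum operation party $k$ performs, namely a joint measurement on its held system and its ancilla followed by retention of the ancilla, which is manifestly a legitimate local operation.

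Next I would group the single-party maps along the bipartition, writing $\Lambda_A=\Lambda_1\otimes\cdots\otimes\Lambda_k$ and $\Lambda_B=\Lambda_{k+1}\otimes\cdots\otimes\Lambda_N$, so that $\mathcal{M}=\Lambda_A\otimes\Lambda_B$ with $\Lambda_A$ acting only on $\mathcal{H}_A$ and $\Lambda_B$ only on $\mathcal{H}_B$. Because a tensor product of completely positive maps is again completely positive, both $\Lambda_A$ and $\Lambda_B$ are completely positive, hence positive. This is precisely the step where complete positivity, rather than mere positivity, is indispensable: in a separable decomposition the local factor $\rho_A^x$ may be entangled among the parties $1,\dots,k$ inside block $A$, and only completely positive $\Lambda_i$ guarantee $\Lambda_A(\rho_A^x)\ge 0$ on such states. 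I expect this complete-positivity check, together with the transpose bookkeeping inherited from the BSM, to be the one genuinely technical obstacle; the rest is formal.

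Finally I would conclude by direct substitution. For any $\sigma\in\mathcal{S}_{\{A\}\{B\}}$ written as $\sigma=\sum_x p_x\,\rho_A^x\otimes\rho_B^x$ as in Eq.~\eqref{Define:s}, linearity and the factorization give $\mathcal{M}(\sigma)=\sum_x p_x\,\Lambda_A(\rho_A^x)\otimes\Lambda_B(\rho_B^x)$, a combination of product operators whose factors are positive by the previous step. Normalizing by $Z=\sum_x p_x\,\mathrm{Tr}[\Lambda_A(\rho_A^x)]\,\mathrm{Tr}[\Lambda_B(\rho_B^x)]$ and absorbing the residual traces into the weights puts $\mathcal{M}(\sigma)$ back into the separable form of Eq.~\eqref{Define:s}, so $\mathcal{M}(\sigma)\in\mathcal{S}_{\{A\}\{B\}}$. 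This yields $\mathcal{M}(\mathcal{S}_{\{A\}\{B\}})\subseteq\mathcal{S}_{\{A\}\{B\}}$, as claimed, and the same grouping argument extends verbatim to an arbitrary multipartition.
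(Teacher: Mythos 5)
Your proof is correct and follows essentially the same route as the paper's: both arguments group the POVM elements across the cut into $M_A = M_1\otimes\cdots\otimes M_k$ and $M_B = M_{k+1}\otimes\cdots\otimes M_N$, apply the map to a separable decomposition term by term, and observe that each product term goes to a product of positive operators --- your $\Lambda_A(\rho_A^x)$ is exactly the paper's $M_A^x = \mathrm{Tr_A}\left[M_A\left(\mathbb{I}_A\otimes\rho_A^x\right)\right]$. The only differences are packaging: you justify positivity via Choi's theorem and complete positivity of tensored single-party maps where the paper asserts positivity of $M_A^x$, $M_B^x$ directly from $M_A, M_B \ge 0$, and you explicitly renormalize $\mathcal{M}(\sigma)$ to match the convex-combination form of Eq.~\eqref{Define:s}, a detail the paper glosses over.
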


\begin{proof}
For any state $\rho \in \mathcal{S}_{\{A\}\{B\}}$, it can be expressed as $\rho=\sum_x p_x \rho_A^x\otimes \rho_B^x $ with $\sum_x p_x=1$, $p_x\ge 0, \forall x$, $\rho_A^x\in \mathcal{H}_A$, and $\rho_B^x\in \mathcal{H}_B$.
In this case, Eq.~\eqref{mapdefine} can be written as
\begin{equation}
\begin{aligned}
\mathcal{M}(\rho)=\sum_x p_x \mathrm{Tr_\rho} \left[(M_A\otimes M_B)(\mathbb{I}_A\otimes\mathbb{I}_B\otimes  \rho_A^x\otimes \rho_B^x)\right],
\end{aligned}
\end{equation}
with $M_A=M_1\otimes \cdots\otimes M_k$, $M_B=M_{k+1}\otimes \cdots\otimes M_N$, $\mathbb{I}_A=\mathbb{I}_1\otimes \cdots\otimes \mathbb{I}_k$ and $\mathbb{I}_B=\mathbb{I}_{k+1}\otimes \cdots\otimes \mathbb{I}_N$.
It can be further written as
\begin{equation}
\begin{aligned}
\mathcal{M}(\rho)=\sum_x p_x M_A^x\otimes M_B^x,
\end{aligned}
\end{equation}
where $M_A^x=\mathrm{Tr_A}\left[M_A\left(\mathbb{I}_A\otimes \rho_A^x\right)\right]$ and $M_B^x=\mathrm{Tr_B}[M_B(\rho_B^x\otimes \mathbb{I}_B)]$, and $\mathrm{Tr_A}$,$\mathrm{Tr_B}$ are
the partial traces over $\mathcal{H_A}$ and $\mathcal{H_B}$ respectively. Note that $M_A$ and $M_B$ are POVMs and $\rho_A^x\in H_A$, $\rho_B^x\in H_B$, so $M_A^x$ and $M_B^x$ are all density matrices, and $\mathcal{M}(\rho)\in \mathcal{S}_{\{A\}\{B\}}$.
\end{proof}

In the trusted device scenario, the entanglement of a bipartition can also be detected with an EW \cite{sperling2013multipartite,brunner2012testing}, $W_{1,2,\dots,N}$. Similar to Eq.~\eqref{eq:Wdecomp}, $W_{1,2,\dots,N}$ can be decomposed to a linear combination of a tense product of local density matrices in $\mathcal{H}_k$ for $1\le k \le N$
\begin{equation}
  W_{1,2,\dots,N} = \sum_{x_1, x_2, \dots, x_N} \beta_{x_1, x_2, \dots, x_N}(\tau_{1, x_1})^\mathrm{T}\otimes (\tau_{2,
  x_2})^\mathrm{T}\otimes\dots\otimes(\tau_{N, x_N})^\mathrm{T}.
\end{equation}
with $x_k \in \{1,2,3,4\}$ where $\beta_{x_1, x_2, \dots, x_N}$ are real coefficients, $(\tau_{k, x_k})^\mathrm{T}\in \mathcal{H}_i$ are density matrices, and $\mathrm{T}$ denotes a matrix transpose. The transpose matrices $\tau_{k, x_k}$ are also density matrices, which are chosen to be the ancillary states for the $k$th party. In order to utilize all the BSM outcomes into the EW, given input ancillary states $\tau_{k, x_k}$ and outcomes $i_k$ for $k$th party, we can decompose $W_{1,2,\dots,N}$ to
\begin{equation}
  W_{1,2,\dots,N} = \sum_{x_1, x_2, \dots, x_N} \beta_{x_1, x_2, \dots, x_N}^{i_1, i_2, \dots, i_N} (\tau_{1, x_1}^{i_1})^\mathrm{T}\otimes (\tau_{2,
  x_2}^{i_2})^\mathrm{T}\otimes\dots\otimes(\tau_{N, x_N}^{i_N})^\mathrm{T}.
\end{equation}
Here, $\tau_{k, x_k}^{i_k}$, with $1\le k\le N$ and $i_k=1,2,3,4$, is defined by
\begin{equation} \label{Define multi m}
\begin{aligned}
\tau_{k, x_k}^{i_k}&=m_{i_k}\tau_{k, x_k} m_{i_k}^\dagger\\
\end{aligned}
\end{equation}
where $m_1=\mathbb{I},m_2=\sigma_z,m_3=\sigma_x,m_4=\sigma_x\sigma_z$. With the coefficients $\beta_{x_1, x_2, \dots, x_N}^{i_1, i_2, \dots, i_N}$, we can define MDIEW for general entanglement structure by
\begin{equation}\label{Eq:MDIEWN}
\begin{aligned}
I(\rho_{1,2,\dots,N}) &= \sum_{i_1,\cdots i_N ,x_1,\cdots x_N }\beta_{x_1, x_2, \dots, x_N}^{i_1, i_2, \dots, i_N}P(i_1,i_2,\cdots i_N|\tau_{1, x_1}\tau_{2,x_2}\cdots \tau_{N,x_N}).
\end{aligned}
\end{equation}

\begin{theorem}\label{structure}
If $W_{1,2,\dots,N}$ detects the entanglement structure of a state $\rho_{1,2,\dots,N}$, $I(\rho_{1,2,\dots,N})$ defined in Eq.~\eqref{Eq:MDIEWN} is an MDIEW for the same structure of
$\rho_{1,2,\dots,N}$.
\end{theorem}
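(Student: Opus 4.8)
The plan is to follow the same two-step template used for Theorem~\ref{Thm:2MDI}, now invoking Lemma~\ref{structuretheorem} as the new structural ingredient. First I would establish the measurement-device-independent (MDI) property: that $I(\sigma_{1,2,\dots,N})\ge 0$ for every $\sigma$ in the separable set $\mathcal{S}_{\{A\}\{B\}}$ of Eq.~\eqref{Define:s} and for arbitrary local POVMs $\{A_{i_k}^{(k)}\}$ on each party's Bell measurement system. Second I would verify the witness property: that a faithfully realized Bell state measurement makes $I(\rho_{1,2,\dots,N})$ defined in Eq.~\eqref{Eq:MDIEWN} a positive multiple of $\mathrm{Tr}[W_{1,2,\dots,N}\rho_{1,2,\dots,N}]$, so that a state detected by $W_{1,2,\dots,N}$ yields a negative value.

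For the MDI step, I would write the outcome probability $P(i_1,\dots,i_N|\tau_{1,x_1}\cdots\tau_{N,x_N})$ as $\mathrm{Tr}[(\bigotimes_k A_{i_k}^{(k)})(\tau_{1,x_1}\otimes\cdots\otimes\tau_{N,x_N}\otimes\sigma)]$, where party $k$'s POVM element acts jointly on its ancilla and the $k$th subsystem. Because parties $1,\dots,k$ belong to $\{A\}$ and $k+1,\dots,N$ to $\{B\}$, the product of local POVM elements factorizes across the bipartition as $M_A\otimes M_B$, so the measurement is exactly a map of the form $\mathcal{M}$ in Eq.~\eqref{mapdefine} acting locally with respect to $\{A\}\{B\}$. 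Tracing out the subsystems of $\sigma$ then yields, for each component $\rho_A^x\otimes\rho_B^x$, effective positive operators on the ancilla registers that again factorize across the bipartition; by Lemma~\ref{structuretheorem} the resulting effective (unnormalized) ancilla operator is separable across $\{A\}\{B\}$. Applying the basis transformation of Eq.~\eqref{Define multi m}, using unitarity of each $m_{i_k}$, and summing over $x_1,\dots,x_N$ with the coefficients $\beta^{i_1,\dots,i_N}_{x_1,\dots,x_N}$ reconstructs $W_{1,2,\dots,N}^{\mathrm{T}}$ for each fixed outcome tuple, exactly as in the proof of Lemma~\ref{Lemma:MDIly}. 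Transferring the transpose onto the positive effective operators and using that $W_{1,2,\dots,N}$ is nonnegative on every state of $\mathcal{S}_{\{A\}\{B\}}$ then gives a nonnegative contribution from each outcome tuple, whence $I(\sigma_{1,2,\dots,N})\ge 0$.

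For the witness step, I would repeat the computation of Lemma~\ref{Lemma:BSM} with $N$ perfect Bell measurements. Each party contributes the teleportation-type identity $\langle\phi^+|(\tau\otimes X)|\phi^+\rangle\propto\mathrm{Tr}[\tau^{\mathrm{T}}X]$, so that $P(i_1,\dots,i_N|\tau_{1,x_1}\cdots\tau_{N,x_N})$ is proportional to $\mathrm{Tr}[\bigotimes_k(\tau_{k,x_k}^{i_k})^{\mathrm{T}}\,\rho_{1,2,\dots,N}]$ with a fixed positive constant independent of the outcomes. Summing over $x_1,\dots,x_N$ reconstructs $W_{1,2,\dots,N}$ for each of the $4^N$ outcome tuples, and summing over all tuples gives $I(\rho_{1,2,\dots,N})$ equal to a positive multiple of $\mathrm{Tr}[W_{1,2,\dots,N}\rho_{1,2,\dots,N}]$, which is negative precisely when $W_{1,2,\dots,N}$ detects the structure of $\rho_{1,2,\dots,N}$.

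I expect the main obstacle to be the MDI step, and within it the correct bookkeeping of the bipartition. The essential point --- and the reason Lemma~\ref{structuretheorem} is needed rather than the fully-product argument of Lemma~\ref{Lemma:MDIly} --- is that $\sigma$ need not be a product across individual parties but only across $\{A\}\{B\}$, so the effective ancilla operators can be correlated within $A$ and within $B$ and only the bipartite product structure survives. One must therefore check that the per-party conjugations $m_{i_k}$ and the transposes distribute consistently over the two blocks and that the decomposition of $W_{1,2,\dots,N}$ genuinely reproduces the same witness for every outcome tuple; once this is confirmed, nonnegativity follows directly from $W_{1,2,\dots,N}$ being a witness for $\mathcal{S}_{\{A\}\{B\}}$, and the extension from this bipartition to arbitrary partitions (hence to entanglement depth) follows by linearity over the convex components.
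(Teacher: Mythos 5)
Your proposal is correct and takes essentially the same route as the paper: both reduce the outcome probabilities to the effective-state map $\mathcal{M}$ of Eq.~\eqref{mapdefine}, invoke Lemma~\ref{structuretheorem} to conclude $\mathcal{M}(\sigma)\in\mathcal{S}_{\{A\}\{B\}}$ so that the reconstructed witness yields $I\ge 0$ for every outcome tuple, and then defer negativity under faithful Bell state measurements to the computation of Lemma~\ref{idealmeasure}. If anything, your explicit bookkeeping of the $m_{i_k}$ conjugations and the transposes is more careful than the paper's own write-up, which absorbs those steps into its notation.
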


\begin{proof}
Here we focus on the two partition case ${\{A\}\{B\}}$. The proof is similar to the one of Theorem \ref{Thm:2MDI} by extending it to more parties. For a separable state $\rho_{AB}\in \mathcal{S}_{\{A\}\{B\}}$, the probability distribution $P(i_1,i_2,\cdots i_N|\tau_{1, x_1}\tau_{2,x_2}\cdots \tau_{N,x_N})$ is given by
\begin{equation}
\begin{aligned}
&P(i_1,i_2,\cdots i_N|\tau_{1, x_1}\tau_{2,x_2}\cdots \tau_{N,x_N})\\
&=  \mathrm{Tr}\left[\bigotimes_{k=1}^N M_k \left(\bigotimes_{k=1}^N \tau_{k, x_k} \otimes \rho_{AB}\right)\right]\\
&=\mathrm{Tr}\left\{\mathrm{Tr}_{\rho}\left[\bigotimes_{k=1}^N M_i \left(\bigotimes_{k=1}^N \mathbb{I}_k \otimes\rho_{AB}\right)\right] \bigotimes_{k=1}^N \tau_{k, x_k}\right\}\\
&=\mathrm{Tr}\left[\bigotimes_{k=1}^N \tau_{k, x_k}\mathcal{M}\left(\rho_{AB}\right)\right]
\end{aligned}
\end{equation}
where  $\mathcal{M}(\rho_{AB}) $ is the map defined in Eq.~\eqref{mapdefine}.
Thus
\begin{equation}
\begin{aligned}
 I(\rho_{AB})=&\sum_{i_1,\cdots i_N ,x_1,\cdots x_N }\beta_{x_1, x_2, \dots, x_N}^{i_1, i_2, \dots, i_N}  \bigotimes_{k=1}^N \tau_{k, x_k}\mathcal{M}(\rho_{AB})\\
 &=\sum_{i_1,\cdots i_N} \mathrm{Tr}[W \mathcal{M}(\rho_{AB})]
\end{aligned}
\end{equation}
According to Lemma~\ref{structuretheorem}, we have that $\mathcal{M}(\rho_{AB})\in \mathcal{S}_{\{A\}\{B\}}$. Thus, we prove that $I(\rho_{AB})\ge 0$ for all $\rho_{AB}\in \mathcal{S}_{\{A\}\{B\}}$.

To show $I(\rho_{1,2,\dots,N})$ to be a witness for $\rho_{1,2,\dots,N}$ with ideal measurements, we refer to Lemma~\ref{idealmeasure}.
\end{proof}

\begin{lemma}\label{idealmeasure}
The entanglement structure of $\rho_{1,2,\dots,N}$ can be witnessed by $I(\rho_{1,2,\dots,N})$ when the BSM is faithfully realized.
\end{lemma}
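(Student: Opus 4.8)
The plan is to mirror the bipartite argument of Lemma~\ref{Lemma:BSM}, now carried out over $N$ parties and $4^N$ joint outcomes. First I would write down the ideal-BSM probability. When every party $k$ faithfully projects the joint state $\tau_{k,x_k}\otimes(\text{its subsystem of }\rho)$ onto the Bell vector $m_{i_k}\ket{\phi^+}\bra{\phi^+}m_{i_k}^\dagger$, the probability factorizes through the transpose identity for the maximally entangled state, giving
\begin{equation}
P(i_1,\dots,i_N|\tau_{1,x_1}\cdots\tau_{N,x_N}) = \frac{1}{2^N}\,\mathrm{Tr}\!\left[\Big(\bigotimes_{k=1}^N(\tau_{k,x_k}^{i_k})^\mathrm{T}\Big)\rho_{1,2,\dots,N}\right],
\end{equation}
where each party's Bell projection contributes one factor of $1/2$ and imprints onto its subsystem the transposed ancilla $(\tau_{k,x_k}^{i_k})^\mathrm{T}$ defined in Eq.~\eqref{Define multi m}. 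This is precisely the $N=2$ relation used in Lemma~\ref{Lemma:BSM} (there $1/2^2=1/4$), extended tensor factor by tensor factor.

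Next I would substitute this expression into the definition of $I(\rho_{1,2,\dots,N})$ in Eq.~\eqref{Eq:MDIEWN} and interchange the summations so that, for each fixed outcome tuple $(i_1,\dots,i_N)$, the inner sum over $x_1,\dots,x_N$ runs over $\beta_{x_1,\dots,x_N}^{i_1,\dots,i_N}\bigotimes_{k=1}^N(\tau_{k,x_k}^{i_k})^\mathrm{T}$. By the chosen decomposition of $W_{1,2,\dots,N}$, this inner sum reconstructs the witness $W_{1,2,\dots,N}$ for every outcome tuple, so each tuple contributes the identical quantity $\frac{1}{2^N}\mathrm{Tr}[W_{1,2,\dots,N}\rho_{1,2,\dots,N}]$.

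Finally, since the outcomes $(i_1,\dots,i_N)$ range over $4^N$ values and each contributes the same amount, summing yields
\begin{equation}
I(\rho_{1,2,\dots,N}) = \frac{4^N}{2^N}\,\mathrm{Tr}[W_{1,2,\dots,N}\rho_{1,2,\dots,N}] = 2^N\,\mathrm{Tr}[W_{1,2,\dots,N}\rho_{1,2,\dots,N}].
\end{equation}
Because $W_{1,2,\dots,N}$ detects the entanglement structure by hypothesis, $\mathrm{Tr}[W_{1,2,\dots,N}\rho_{1,2,\dots,N}]<0$, and hence $I(\rho_{1,2,\dots,N})<0$, which establishes the claim.

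The computation is routine once the first step is in place; the only point that needs care is that first step, namely verifying the per-party normalization factor $1/2$ and confirming that the Bell projection built from $m_{i_k}$ imprints exactly $(\tau_{k,x_k}^{i_k})^\mathrm{T}$ on the corresponding subsystem. I expect this transpose-and-normalization bookkeeping across $N$ tensor factors, together with keeping the index placement of $\beta_{x_1,\dots,x_N}^{i_1,\dots,i_N}$ consistent with the decomposition of $W_{1,2,\dots,N}$, to be the main (though still elementary) obstacle.
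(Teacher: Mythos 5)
Your proposal is correct and follows essentially the same route as the paper's own proof: the same ideal-BSM probability $2^{-N}\mathrm{Tr}\bigl[\bigotimes_{k}(\tau_{k,x_k}^{i_k})^{\mathrm{T}}\,\rho_{1,2,\dots,N}\bigr]$, the same reconstruction of $W_{1,2,\dots,N}$ from the inner sum over $x_1,\dots,x_N$ for each fixed outcome tuple, and the same count of $4^N$ tuples giving $I(\rho_{1,2,\dots,N})=2^N\,\mathrm{Tr}[W_{1,2,\dots,N}\rho_{1,2,\dots,N}]<0$.
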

\begin{proof}
When the BSM is faithfully performed, the probability distribution  $P(i_1,i_2,\cdots i_N|\tau_{1, x_1}\tau_{2,x_2}\cdots \tau_{N,x_N})$ is given by,
\begin{equation}
\begin{aligned}
&P(i_1,i_2,\cdots i_N|\tau_{1, x_1}\tau_{2,x_2}\cdots \tau_{N,x_N})\\
&=\mathrm{Tr}\left[\left(\bigotimes_{k=1}^N m_{i_k}\ket{\phi^+}\bra{\phi^+}m_{i_k}^\dagger\right) \times \left(\bigotimes_{k=1}^N \tau_{k, x_k} \otimes\rho_{1,2,\dots,N}\right)\right]\\
&=2^{-N}\mathrm{Tr}\left[\bigotimes_{k=1}^N \left(\tau_{k, x_k}^{i_k}\right)^\mathrm{T} \rho_{1,2,\dots,N}\right].\\
\end{aligned}
\end{equation}
Then,
\begin{equation}
\begin{aligned}
&I(\rho_{1,2,\dots,N})\\
&=\sum_{i_1,\cdots i_N ,x_1,\cdots x_N }\beta_{x_1, x_2, \dots, x_N}^{i_1, i_2, \dots, i_N} \mathrm{Tr}\left[\bigotimes_{k=1}^N \left(\tau_{k, x_k}^{i_k}\right)^\mathrm{T}
\rho_{1,2,\dots,N}\right]/2^N\\
&=\sum_{i_1,\cdots i_N}\mathrm{Tr}\left[W_{1,2,\dots,N}\rho_{1,2,\dots,N}\right]/2^N\\
&=2^N \mathrm{Tr}\left[W_{1,2,\dots,N}\rho_{1,2,\dots,N}\right]<0.
\end{aligned}
\end{equation}

\end{proof}

Similar to Eq.~\eqref{Define:s}, we can also define the multi-partition states. The proofs of Lemma \ref{structuretheorem} and Theorem \ref{structure} mainly focus on the bipartition case, but can be
extended to multi-partition cases naturally. Notice that Lemma \ref{idealmeasure} is in general independent of total party number $N$ and entanglement structure. As long as there exists a witness $W$
, then $I(\rho_{1,2,\dots,N})$ is a witness under the ideal measurements assumption.

\subsection{Detecting entanglement depth}
Besides the entanglement of subsystems, there are other high-level entanglement properties for multipartite quantum states, such as \emph{entanglement depth} \cite{sorensen2001entanglement,liang2015family}. There exists a conventional witness $W_{1,2,\dots,N}$ for detecting the depth of a quantum state \cite{sorensen2001entanglement,liang2015family}. Following a similar way of detecting entanglement structure of subsystems, one can define an MDIEW for detecting entanglement depth similar to Eq.~\eqref{Eq:MDIEWN}. Then, according to Lemma \ref{idealmeasure}, one can easily see that it is indeed a witness for depth when the measurement is ideally realized. Now, we need to prove that such a witness is MDI with the following lemma.

\begin{lemma}\label{depththeorem}
The map $\mathcal{M}$, defined in Eq.~\eqref{mapdefine}, cannot increase the entanglement depth.
\end{lemma}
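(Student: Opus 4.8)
The plan is to exploit the linearity of $\mathcal{M}$ together with the factorized structure of $k$-producible states, reducing the statement to a block-wise application of the argument already used in Lemma~\ref{structuretheorem}. Since the entanglement depth of a state is the smallest $k$ for which it is $k$-producible, it suffices to prove that $\mathcal{M}$ sends every $k$-producible state to a $k$-producible state.

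First I would write a $k$-producible state $\rho_{1,2,\dots,N}$ as a convex mixture $\rho_{1,2,\dots,N}=\sum_y p_y \ket{\phi_y}\bra{\phi_y}$ of pure $k$-producible states, where each $\ket{\phi_y}=\bigotimes_{i} \ket{\phi_{i}^y}$ factorizes over a partition of the $N$ parties into blocks, each block containing at most $k$ parties. Because $\mathcal{M}$ is linear in its argument (it is a fixed product POVM followed by a partial trace), it is enough to show that $\mathcal{M}(\ket{\phi_y}\bra{\phi_y})$ is $k$-producible for each $y$; the general claim then follows because a convex mixture of $k$-producible states is again $k$-producible.

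The key step is the block-wise factorization. For a fixed pure product state $\ket{\phi}=\bigotimes_i \ket{\phi_i}$ with $\ket{\phi_i}$ supported on the parties of block $B_i$ and $|B_i|\le k$, the operators $M_1\otimes\cdots\otimes M_N$ act independently on each party, and $\mathrm{Tr}_{\rho}$ is likewise taken party by party. Grouping the local POVMs and the corresponding partial traces according to the blocks $B_i$, the same computation as in the proof of Lemma~\ref{structuretheorem}, now applied to each block separately rather than to a single bipartition, shows that
\begin{equation}
\mathcal{M}\left(\ket{\phi}\bra{\phi}\right)=\bigotimes_i \sigma_{B_i},
\end{equation}
where $\sigma_{B_i}$ is a state on the ancillary systems of block $B_i$ alone. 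Thus the output is a tensor product of states, each involving at most $k$ parties. Decomposing every $\sigma_{B_i}$ into its own pure-state ensemble and distributing the tensor product over these mixtures exhibits $\bigotimes_i \sigma_{B_i}$ as a mixture of pure states that each factorize into blocks of size at most $k$, i.e. as a $k$-producible state.

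The main obstacle I anticipate is bookkeeping rather than conceptual: different pure components $\ket{\phi_y}$ in the decomposition of $\rho_{1,2,\dots,N}$ may be producible with respect to \emph{different} partitions into blocks, so one cannot fix a single partition for the whole mixture. This is harmless, since $k$-producibility only constrains the maximal block size and is closed under mixing states with possibly different block partitions; one simply applies the factorization argument to each $\ket{\phi_y}$ with its own partition and then recombines. The one point requiring genuine care is verifying that $\sigma_{B_i}$ is supported only on the ancillas of $B_i$, so that the partial trace introduces no effective coupling across blocks; this follows directly from the product form of the POVM and the factorization of $\ket{\phi}$ across the blocks, exactly as in Lemma~\ref{structuretheorem}.
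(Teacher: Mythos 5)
Your proposal is correct and follows essentially the same route as the paper's own proof: decompose the $k$-producible state into producible components, use the product structure of the POVM and the partial trace to factorize the map block by block (with possibly different block partitions for different components), and conclude that the output is a mixture of tensor products each involving at most $k$ parties. The only cosmetic difference is that you reduce to pure components while the paper works directly with mixed tensor-factor components $\rho_i^x$, which changes nothing of substance.
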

\begin{proof}
An $N$-partite state $\rho$ that has $k$-depth entanglement  can be expressed as follows:
\begin{equation}
\begin{aligned}
\rho=\sum_x p_x \bigotimes_{i=1}^{m_x} \rho_i^x,
\end{aligned}
\end{equation}
where $m_x\le N$, $\sum_x p_x=1$, and for any $x$, $p_x\ge 0$, $\sum_x p_x=1$, and for every $i$ and $x$, the state $\rho_i^x$ contains at most $k$ parties. After the map $\mathcal{M}$, we have
\begin{equation}
\begin{aligned}
\mathcal{M}(\rho)&=\sum_x p_x \mathrm{Tr_\rho}[(M_1\otimes\cdots\otimes M_n)(\mathbb{I}_1\otimes \cdots \otimes \mathbb{I}_N\otimes\rho)]\\
&=\sum_x p_x\mathrm{Tr_\rho}\left[\left(\bigotimes_{i=1}^{m_x}M_i^x\right)\left(\bigotimes_{i=1}^{m_x}\mathbb{I}_i^x \otimes \bigotimes_{i=1}^{m_x} \rho_i^x\right)\right]\\
&=\sum_x p_x \bigotimes_{i=1}^{m_x}\sigma _i^x,
\end{aligned}
\end{equation}
where $\sigma_i^x=\mathrm{Tr_{\rho_i^x}}\left[M_i^x\left(\mathbb{I}_i^x\otimes \rho_i^x\right)\right]$ is a positive Hermitian matrices and involves at most $k$ parties. Thus
$\mathcal{M}(\rho)$ is at most $k$-depth entangled.
\end{proof}

\begin{theorem}
If $W_{1,2,\dots,N}$ detects entanglement depth for state $\rho_{1,2,\dots,N}$, then $I(\rho_{1,2,\dots,N})$ defined in Eq.~\eqref{Eq:MDIEWN} is an MDIEW for $\rho_{1,2,\dots,N}$.
\end{theorem}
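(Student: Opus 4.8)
The plan is to follow the same two-step template that established Theorem~\ref{structure}, now replacing the role of separability with that of bounded entanglement depth. Concretely, I must show (i) that $I(\sigma_{1,2,\dots,N})\ge 0$ for every state $\sigma_{1,2,\dots,N}$ of depth at most $k$ under an \emph{arbitrary} choice of local measurements, so that the witness is genuinely MDI; and (ii) that $I(\rho_{1,2,\dots,N})<0$ when the BSM is faithfully realized and $W_{1,2,\dots,N}$ detects the depth of $\rho_{1,2,\dots,N}$, so that the witness actually fires. The second part is immediate: as remarked after Lemma~\ref{idealmeasure}, that lemma is independent of the particular entanglement structure and relies only on the existence of a conventional witness $W_{1,2,\dots,N}$; hence under ideal BSM one directly obtains $I(\rho_{1,2,\dots,N})=2^N\,\mathrm{Tr}[W_{1,2,\dots,N}\rho_{1,2,\dots,N}]<0$.

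For the MDI soundness in step (i), I would fix an arbitrary tuple of local measurement operators $\{M_k\}$ and an arbitrary state $\sigma_{1,2,\dots,N}$ of depth at most $k$. Repeating the computation in the proof of Theorem~\ref{structure} essentially verbatim, I would rewrite the joint success probability $P(i_1,\dots,i_N|\tau_{1,x_1}\cdots\tau_{N,x_N})$ as $\mathrm{Tr}[\bigotimes_k \tau_{k,x_k}\,\mathcal{M}(\sigma_{1,2,\dots,N})]$, where $\mathcal{M}$ is the outcome-dependent map of Eq.~\eqref{mapdefine} built from the measurement operators. Summing against the coefficients $\beta^{i_1,\dots,i_N}_{x_1,\dots,x_N}$ and using that, for each fixed outcome tuple, the sum over the inputs $x_1,\dots,x_N$ reconstructs $W_{1,2,\dots,N}$, I would collapse the whole expression to $I(\sigma_{1,2,\dots,N})=\sum_{i_1,\dots,i_N}\mathrm{Tr}[W_{1,2,\dots,N}\,\mathcal{M}(\sigma_{1,2,\dots,N})]$.

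The decisive step is then to invoke Lemma~\ref{depththeorem}: because $\mathcal{M}$ cannot increase the entanglement depth, each $\mathcal{M}(\sigma_{1,2,\dots,N})$ is again a positive operator of depth at most $k$. Since $W_{1,2,\dots,N}$ is by hypothesis a depth witness, $\mathrm{Tr}[W_{1,2,\dots,N}\,\eta]\ge 0$ for every depth-$\le k$ density matrix $\eta$, so every summand is nonnegative and $I(\sigma_{1,2,\dots,N})\ge 0$. Combined with step (ii), this completes the proof, and the extension from the bipartition bookkeeping to the full $N$-partite depth setting is the same as for Theorem~\ref{structure}.

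The part I expect to require the most care is not any single calculation---each mirrors Theorem~\ref{structure}---but the bookkeeping that the map entering the reconstruction really is the depth-preserving map of Lemma~\ref{depththeorem}, together with the fact that the operators $\sigma_i^x$ produced by $\mathcal{M}$ need not be normalized. I must therefore justify that $\mathrm{Tr}[W_{1,2,\dots,N}\,\mathcal{M}(\sigma)]\ge 0$ still holds on these possibly sub-normalized outputs; this follows because the set of $k$-producible positive operators is a convex cone closed under nonnegative scaling, on which the witness condition is homogeneous. Once this is secured, the theorem is a direct assembly of Lemma~\ref{depththeorem} and Lemma~\ref{idealmeasure}.
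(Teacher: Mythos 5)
Your proposal is correct and follows exactly the paper's intended route: the paper itself skips this proof, stating only that one repeats the argument of Theorem~\ref{structure} with Lemma~\ref{structuretheorem} replaced by Lemma~\ref{depththeorem}, which is precisely the two-step scheme you carry out (MDI soundness via the depth-non-increasing map $\mathcal{M}$ of Eq.~\eqref{mapdefine}, plus Lemma~\ref{idealmeasure} for the faithful-BSM case). Your extra observation that the outputs of $\mathcal{M}$ are in general sub-normalized, handled by the homogeneity of the witness condition on the cone of $k$-producible positive operators, is a detail the paper glosses over but does not constitute a different approach.
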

\begin{proof}
We skip the proof, since it is very similar to the one for Theorem \ref{structure}, where we only need to replace Lemma \ref{structuretheorem} with Lemma \ref{depththeorem}.
\end{proof}

In summary, our MDI scheme can detect entanglement structure and entanglement depth. In particular, when an $N$-partite quantum state has a depth of $N$, it is also called genuinely entangled. Thus, our scheme can also be used for detecting genuine entanglement.


\subsection{Example}
Here, we show an explicit example to illustrate the MDI entanglement depth detection method. We consider a mixture of the tripartite $W$-state and white noise as the target state,
\begin{equation}\label{eq:target}
\begin{aligned}
\rho=p~\ket{\psi_W}\bra{\psi_W}+\frac{1-p}{8}\mathbb{I},
\end{aligned}
\end{equation}
where $\ket{\psi_W}=(\ket{001}+\ket{010}+\ket{100})/{\sqrt{3}}$ is the $W$-state. To detect the entanglement of this state, we utilize a witness
\begin{equation}
\begin{aligned}
W=\alpha-\ket{\psi_W}\bra{\psi_W},
\end{aligned}
\end{equation}
which gives an average value of
\begin{equation}\label{}
\mathrm{Tr}[W\rho] =\alpha - \frac{1}{8}-\frac{7}{8}p.
\end{equation}

For different values of $\alpha$, it is shown in Ref.~\cite{sperling2013multipartite} that the witness can detect different entanglement depths of the state. For example, when $\alpha = 2/3$, a negative value $\mathrm{Tr}[W\rho]<0$ would indicate $\rho$ to be genuinely entangled. That is, its entanglement depth is three. When $\alpha = 4/9$, a negative value $\mathrm{Tr}[W\rho]<0$ will indicate $\rho$ to be entangled instead of fully separable. That is, its entanglement depth is at least two. The target state, defined in Eq.~\eqref{eq:target}, is genuinely entangled with a depth of three when $p > 13/21$; and it is not fully separable with a depth at least two when $p>23/63$.

Now, we show the MDI detection for entanglement depth. Suppose that the ancillary input states are $\tau_{k,1}=\frac{\mathbb{I}}{2}$, $\tau_{k,2}=\frac{\mathbb{I}+\sigma_x}{2}$, $\tau_{k,3}=\frac{\mathbb{I}+\sigma_y}{2}$,
$\tau_{k,4}=\frac{\mathbb{I}+\sigma_z}{2}$, where the indexes $k=\{1,2,3\}$ denote the three parties, $\mathbb{I}$ is the identity matrix, and $\sigma_x,\sigma_y,\sigma_z$ are Pauli matrices. The witness $W$ can be decomposed in the basis $(\tau_{1,x_1}^{i_1})^{\mathrm{T}}\otimes (\tau_{2,x_2}^{i_2})^{\mathrm{T}}\otimes (\tau_{3,x_3}^{i_3})^{\mathrm{T}}$.  Here for simplicity, we show only the coefficient matrices $\beta_{x_1,x_2,x_3}^{i_1,i_2,i_3}$ for certain outputs $i_1=i_2=i_3=1$,
\begin{equation}       
\beta^{1,1,1}_{x_1,x_2,1}=\left[                 
  \begin{array}{cccc}   
    4/9 &  0 &  0&  0\\  
     0 &  0 & 0&2/3\\
     0 & 0 & 0&-2/3\\
     0 & 2/3 & -2/3&-2/3\\  
  \end{array}
\right],                 
\end{equation}
\begin{equation}       
\beta^{1,1,1}_{x_1,x_2,2}=\left[                 
  \begin{array}{cccc}   
    0 & 0 & 0&2/3\\  
    0 & 0 & 0&-2/3\\
    0 & 0 & 0&0\\
    2/3 & -2/3& 0&0\\  
  \end{array}
\right],                 
\end{equation}
\begin{equation}       
\beta^{1,1,1}_{x_1,x_2,3}=\left[                 
  \begin{array}{cccc}   
    0 & 0 & 0&-2/3\\  
    0 & 0 & 0&0\\
    0 & 0 & 0&2/3\\
   -2/3 & 0 & 2/3&0\\  
  \end{array}
\right],                 
\end{equation}
\begin{equation}       
\beta^{1,1,1}_{x_1,x_2,4}=\left[                 
  \begin{array}{cccc}   
    -4/3 & 2/3 & 2/3& -2/3\\  
    2/3 & -2/3 & 0&0\\
    2/3 & 0 & -2/3&0\\
    -2/3 & 0 & 0&1\\  
  \end{array}
\right],                 
\end{equation}
where the matrix indexes run over different values of $x_1$ and $x_2$. For the other outputs, the witness $W$ can be similarly decomposed in the other base according to Eq.~\eqref{Define multi m} and the coefficient matrices $\beta_{x_1,x_2,x_3}^{i_1,i_2,i_3}$ for other outputs can be obtained in a similar way. In total, the scheme involves $64$ different outputs cases. With these coefficients, we can verify that the MDIEW value $I(\rho)$, given in Eq.~\eqref{Eq:MDIEWN}, equals $8\alpha - 1-7p$ when measurement is perfectly realized. For different values of $\alpha$, a negative MDIEW value of $I(\rho)$ can be used to detect the entanglement depth.

\section{Discussion} \label{Sec:Conclusion}

In this work, we propose an efficient measurement-device-independent entanglement witness scheme that can be applied for detecting multipartite entanglement structures. Compared to the original proposal \cite{Branciard13}, which cannot detect multipartite entanglement efficiently, we make use of all measurement outcomes for overcoming this problem. Furthermore, we show that our scheme can detect complex entanglement structures, including entanglement between subsystems and entanglement depth. Our result can be applied to the state-of-art experiment for witnessing multipartite entanglement without trusting the measurement devices.

Recently, improved MDIEW schemes that maximally exploit the experiment data have been proposed \cite{PhysRevA.93.042317, PhysRevLett.116.190501}. In these schemes, one can additionally run a post-processing program to find the optimal coefficient that minimize the MDIEW value given the probability distribution. In this case, all measurement outcomes can be maximally exploited after the optimization. However, although the optimization works efficiently for small-scale systems, it will become exponentially hard with increasing number of parties. Thus, how to apply the optimal scheme for efficiently detecting multipartite entanglement is an interesting prospective project. Conventional EW is originally designed to efficiently detect the entanglement of states. As our MDIEW is based on conventional EW, it can be used for efficient and practical entanglement detection. Whether the combination of these two methods will lead to a better performance is also an
interesting open problem.


In our MDIEW scheme, the ancillary states for each party should form a basis for Hermitian operators, in which all witness operator can be decomposed. The Hermitian operator for qubits has a basis that consists of four elements. In this case, when the input ancillary states are independently prepared for each party, there are at least $4^N$ different types of inputs, which is exponential to the number of parties $N$. Such a problem can be resolved by noticing a beautiful property of MDIEW found in Ref.~\cite{rosset2013entangled}: the MDIEW scheme is valid even though shared randomness and classical communications are allowed. In this case, as long as the input ancillary states as a whole are randomly prepared, the MDIEW scheme will be reliable. It has been shown that we only need to randomly prepare input ancillary states for each party without worrying about whether the sample size is large enough for all different input conditions \cite{Yuan14}. However, it is still an interesting open question to see whether the number of different input ancillary states that defines an MDIEW can be polynomial to the number of parties $N$.

In the Bell test, three famous loopholes should be closed for guaranteeing a faithful violation of Bell inequality \cite{Brunner14}. The \emph{locality loophole} requires that different parties are sufficiently separated such that they cannot signaling. The \emph{efficiency loophole} requires that the detection efficiency should be larger than a certain threshold \cite{Massar03, Wilms08}. The \emph{randomness or freewill loophole} requires that the input randomness need to be random enough \cite{Koh12,Yuan15,Yuan152}. When the three loopholes are closed, a faithful violation of Bell inequality can witness the existence of entanglement. In the MDIEW scheme, we can see that the locality and efficiency loopholes are not more problems any longer \cite{rosset2013entangled}. On the other hand, it is still meaningful to discuss the randomness or freewill loophole. In one extreme case where all the inputs are perfectly random, the MDIEW is secure; while in the other extreme case where the inputs are all pre-determined, the MDIEW becomes unreliable. Therefore, it would be an interesting question to investigate the randomness requirement that guarantees the security of the MDIEW scheme.


We show that MDIEW can efficiently detect multipartite entanglement stricture. It would be  interesting to see whether more complex entanglement properties can be detected in an MDI manner. For instance, it is well known that multipartite entanglement can be categorized into different classes under stochastic local operations and classical communication
\cite{Dur00,borsten2010four}. Conventional witness can be used for detecting a different entanglement class \cite{acin2001classification}. The key of the MDIEW for entanglement structure is that the map $\mathcal{M}$ defined in Eq.~\eqref{mapdefine} can not generate entanglement. As the MDIEW scheme allows classical communication, in transforming a conventional EW to an MDI one may not change the entanglement class intuitively. However, it is still an open question to design MDIEW for entanglement classification.

\section*{Acknowledgments}
The authors acknowledge insightful discussions with Y.~Liang. This work was supported by the 1000 Youth Fellowship program in China. Z.~Q.~and X.~Y.~contributed equally to this work.

\appendix
\section{Calculation of the $p$-value }\label{pvaluecal}

Given an observed negative average MDIEW value $\bar{I}(\rho_{1,2,\dots,N})$, we need to avoid reaching the wrong conclusion. In statistics, we can apply a $p$-value to quantify the probability of getting such a negative value with separable states. Under the assumption of independent and identically distributed data and large $G$,  $\bar{P}(1,\cdots, 1|\tau_{1,x_1}\cdots \tau_{x_N})$, $\bar{I}(\rho_{1,2,\dots,N})$ both follow the Gaussian distribution. Therefor, the $p$-value of an observed negative value with $G$ experiment runs is:
\begin{equation}\label{eq:pvalue}
p = e^{-M\left(\bar{I}(\rho_{1,2,\dots,N}))\right)^2/(2\sigma^2)}.
\end{equation}
Here, $\sigma$ is the standard deviation of ${I}(\rho_{1,2,\dots,N})$, which is given by
\begin{equation}
\sigma = \sqrt{\sum_{x_1,\cdots x_N} \beta_{x_1,\cdots x_N}^2\sigma^2(\tau_{1,x_1}\cdots \tau_{N,x_N})},
\end{equation}
where $\sigma(\tau_{1,x_1}\cdots \tau_{x_N})$ is the standard deviation of  $P(1,\cdots 1|\tau_{1,x_1}\cdots \tau_{N,x_N})$ and can be expressed as follows:
\begin{equation}
\sigma(\tau_{1,x_1}\cdots \tau_{x_N})=\sqrt{P(1,\cdots 1|\tau_{1,x_1}\cdots
\tau_{N,x_N})\left(1-P(1,\cdots 1|\tau_{1,x_1}\cdots \tau_{N,x_N})\right)}
\end{equation}

The probability distribution is
\begin{equation}
\begin{aligned}
&P(1,\cdots 1|\tau_{1,x_1}\cdots \tau_{x_N})\\
&=\mathrm{Tr}[(\ket{\phi^+}\bra{\phi^+}\otimes \cdots \otimes \ket{\phi^+}\bra{\phi^+})(\tau_{1,x_1} \otimes \cdots  \tau_{x_N})\otimes \rho_{1,2,\dots,N})]\\
&=\mathrm{Tr}[(\tau_{1,x_1}^{\mathrm{T}} \otimes \cdots  \otimes\tau_{x_N}^{\mathrm{T}}) \rho_{1,2,\dots,N})]/(d_1d_2\dots d_N)
\end{aligned}
\end{equation}
where $\ket{\phi^+} = \frac{1}{\sqrt{2}}(\ket{00} + \ket{11})$, and $d_i$ and $\tau_{i,x_i} (i=1,\cdots,N)$ are the dimension and ancillary state for the $i$th party, respectively.

For a randomly chosen state, $\rho_{1,2,\dots,N} = I_{d_1d_2\dots d_N}/(d_1d_2\dots d_N)$, where $I_{d_1d_2\dots d_N}$ is the identity matrix with size $d_1d_2\dots d_N$, we further have that
\begin{equation}
\begin{aligned}
  P(1,\cdots 1|\tau_{1,x_1}\cdots \tau_{x_N}) &= \mathrm{Tr}[(\tau_{1,x_1}^{\mathrm{T}} \otimes \cdots  \otimes\tau_{x_N}^{\mathrm{T}}) I_{d_1d_2\dots d_N}/(d_1d_2\dots d_N)]/(d_1d_2\dots d_N) \\
  &= 1/(d_1d_2\dots d_N)^2
\end{aligned}
\end{equation}

Thus roughly speaking, we have that
\begin{equation}
P(1,\cdots 1|\tau_{1,x_1}\cdots \tau_{N,x_N}) \sim 1/(d_1d_2\dots d_N)^2.
\end{equation}
Consequently, we find that the $p$-value is an order of $e^{-G/ \mathrm{O}((d_1d_2\dots d_N)^2)}$.

\bibliography{bibMDIEW}


\end{document}